\documentclass[final, 4p, 12pt]{elsarticle}
\usepackage{amssymb}
  \usepackage{amsmath}
  \usepackage{amsthm}
   \theoremstyle{plain}
  \usepackage{bm}

  \usepackage{verbatim}
  \usepackage{textcomp} % \texttimes, \textdegree, \textohm, \textmu
\usepackage{array}
\def\foo#1\endfoo{}
\newcolumntype{L}{@{}>{\foo}l<{\endfoo}}
\usepackage{booktabs} % improved rules (lines) for tables
\usepackage{dcolumn}  % align at decimal in tables
\usepackage{multirow} % table elements spanning multiple rows

\usepackage{lettrine}
\usepackage{algorithm}
\usepackage{algorithmicx}
\usepackage{algpseudocode}
\makeatletter \renewcommand*{\ALG@name}{Method}
\usepackage{mathrsfs}  

\usepackage{simplemargins}
\setallmargins{.8in}
\newtheorem{thm}{Theorem}
\makeatletter
\renewcommand*\env@matrix[1][c]{\hskip -\arraycolsep
  \let\@ifnextchar\new@ifnextchar
  \array{*\c@MaxMatrixCols #1}}
\makeatother

\usepackage{url} 

\newtheorem{lem}{Lemma}

\theoremstyle{definition}
\newtheorem{defn}{Definition}
\newtheorem*{pf}{Proof}

\DeclareFontFamily{U}{matha}{\hyphenchar\font45}
\DeclareFontShape{U}{matha}{m}{n}{
      <5> <6> <7> <8> <9> <10> gen * matha
      <10.95> matha10 <12> <14.4> <17.28> <20.74> <24.88> matha12
      }{}
\DeclareSymbolFont{matha}{U}{matha}{m}{n}
\DeclareFontFamily{U}{mathx}{\hyphenchar\font45}
\DeclareFontShape{U}{mathx}{m}{n}{
      <5> <6> <7> <8> <9> <10>
      <10.95> <12> <14.4> <17.28> <20.74> <24.88>
      mathx10
      }{}
\DeclareSymbolFont{mathx}{U}{mathx}{m}{n}

\DeclareMathSymbol{\obot}         {2}{matha}{"6B}
\DeclareMathSymbol{\bigobot}       {1}{mathx}{"CB}
\newcommand{\oant}{\mbox{OA}(N,k,2,t)}
\newcommand{\oantf}{\mbox{OA}(N,k,2,4)}

\newcommand{\oan}{{\rm OA}(N,k,s,t)}

\newcommand{\1}{{\bf 1}}

\newcommand{\bb}{{\bf{b}}}
\newcommand{\cc}{{\bf c}}

\newcommand{\uu}{{\bf u}}

\newcommand{\x}{{\bf x}}

\newcommand{\vv}{{\bf v}}

\newcommand{\y}{{\bf y}}

\newcommand{\Y}{{\bf Y}}

\newcommand{\X}{{\bf X}}

\newcommand{\B}{{\bf B}}

\newcommand{\ddd}{{\bf d}}
\newcommand{\D}{{\bf D}}

\newcommand{\A}{{\bf{A}}}

\newcommand{\etal}{\emph{et al.}}

\newcommand{\GLPP}{G^{{\rm LP(\ref{eqn:ILPOAfin})}}}
\usepackage{hyperref}
\newcommand{\GLPfin}{G^{\rm LP(\ref{eqn:ILPOAfin})}}
\begin{document}
%-----------------------------------------------------------------------
\begin{frontmatter}
\title{Parallelizing the branch-and-bound with isomorphism pruning algorithm for classifying orthogonal arrays}
\author[AFIT]{Dursun A.~Bulutoglu}%\corref{cor1}}
\ead{dursun.bulutoglu@gmail.com}
%\author[]{}
%\ead{}
\address[AFIT]{Department of Mathematics and Statistics, Air Force Institute of Technology,\\Wright-Patterson Air Force Base, Ohio 45433, USA}
%\address[]{ }
%\cortext[cor1]{Corresponding author}
\journal{Operations Research Letters}
%-----------------------------------------------------------------------
\begin{abstract}
We provide a method for parallelizing the branch-and-bound with isomorphism pruning algorithm developed by Margot \, [Symmetric ILP: Coloring and small integers, Discrete Optimization (4) (2007), 40-62]. We apply our method to classify orthogonal arrays. For classifying all non-OD-equivalent OA$(128,9,2,4)$ and OA$(144,9,2,4)$ our method results in linear speedups. Finally, our method enables classifying all non-OD-equivalent  OA$(192,k,2,4)$ for $k=9,10,11$ for the first time.
\end{abstract}
\begin{keyword}
Breadth first search; Integer linear program;  LP relaxation symmetry group;   {\tt Perl} programming language; Permutation group action
\MSC {65Y05 05B15 20B05 05E18}
\end{keyword}
\end{frontmatter}
%--------------------------------------------------------------------------
\section{Introduction}
%------------------------------------------------------------
A branch-and-bound (B\&B) algorithm is used to find one or all optimum solutions to an integer linear program (ILP) with $n$ variables of the form
\begin{equation}\label{eqn:ILP}
\begin{aligned}
& \quad \quad \quad \quad \min  \quad \cc^{\top}\x  \\
& \quad  \A \x=\bb,\quad \B\x\leq  \ddd, \quad \x \in 
\mathbb{Z}^n, 
& \quad \quad \quad \quad \quad \ \  
\end{aligned}
\end{equation}
where $\A$ and $\B$ are $m_1 \times n$ and $m_2 \times n$
constraint matrices, and $\cc \in \mathbb{R}^n$, $\bb\in \mathbb{R}^{m_1}$,
 $\ddd \in \mathbb{R}^{m_2}$. A B\&B algorithm recursively divides ILP~(\ref{eqn:ILP}) into subproblems by branching on the values of its integer variables. This results in the {\em B\&B search tree}, where each node is a subproblem. In a B\&B algorithm, the branching rule that selects the minimum index non-fixed variable for branching is called {\em minimum index branching}. Throughout the paper, unless otherwise stated,  minimum index branching is the only branching rule
 that is used or discussed.
 
 At each node, a B\&B algorithm solves the linear programming (LP) relaxation obtained by dropping the integrality of the variables. Then, a node is pruned
 if either the LP relaxation is infeasible or the optimum objective value of the LP relaxation is larger than that of a previously known  feasible solution. 
 Such an algorithm can either be stopped as soon as an optimum solution based on a known global lower bound is found or run until all optimum solutions are obtained. Since an LP relaxation needs to be solved at every node of a  B\&B search tree,
 the number of nodes in the tree, i.e., the size of the tree, dictates the CPU time required to find all optimum solutions, where smaller-sized trees require shorter CPU time.
In fact, the required CPU time depends roughly linearly on the
number of nodes~\cite{cornuejols2004}. On the other hand, 
the size of a B\&B search tree depends on the order of the variables or, more generally, on the branching rule of the B\&B algorithm~\cite{Linderoth1999}. Shrinking the size of a B\&B search tree by developing new branching strategies is an active area of research~\cite{Aardal2024}.
The size of the B\&B search tree also depends on the formulation of the ILP. In~\cite{Pataki2009}, there are many examples of $50$ variable knapsack ILPs for which a
reformulation based on a change of variables resulting from the Lenstra–Lenstra–Lov\'asz
(LLL) lattice basis reduction algorithm decreases the B\&B search tree size by a factor of
at least 10 million.

The set of permutations of the variables of an ILP whose elements map each feasible solution to another feasible solution with the same objective function value is called the {\em symmetry group} of the ILP.  For ILPs with large symmetry groups, the B\&B algorithm processes a large number of nodes with identical feasibility statuses and LP relaxation objective function values.  To decrease the size of the B\&B search  tree for such problems, Margot~\cite{Margot2007} developed the B\&B with isomorphism pruning algorithm. The B\&B with isomorphism pruning algorithm uses a subgroup of the symmetry group of the ILP for additional pruning, resulting in a smaller B\&B search tree. Using a larger subgroup is more desirable, as it allows for additional pruning, resulting in a smaller B\&B search tree and, consequently, decreased solution times. For a subgroup $G$ of the symmetry group of ILP~(\ref{eqn:ILP}), two solutions $\x_1$ and $\x_2$ of ILP~(\ref{eqn:ILP}) are called $G$-isomorphic if there is a  $g \in G$ such that $g(\x_1)=\x_2$. When the B\&B with isomorphism pruning algorithm is used to find all optimum solutions of  ILP~(\ref{eqn:ILP}), it provides a set of all
non-$G$-isomorphic optimum solutions.

The largest subgroup of the symmetry group of an ILP for which there is a known practical method to obtain is the LP relaxation symmetry group. The {\em LP relaxation symmetry group}, $G^{\rm LP}$, of an ILP is the set of all permutations of its variables that send each feasible point of its LP relaxation to another feasible point with the same objective function value. Geyer \etal~\cite{Geyer2018} developed a practical method for finding $G^{\rm LP}$  by determining the automorphism group of a particular graph obtained from the ILP. In this paper, we develop a parallelization method for the B\&B algorithm with isomorphism pruning and apply it to classifying orthogonal arrays using the $\GLPP$ of an orthogonal array defining ILP.

\begin{defn}
 An $N$ row, $k$ column, $s$-symbol, and strength $t$ orthogonal array OA$(N,k,s,t)$  is an $N\times k$ array of $s$ symbols such that 
 every subset of $t$ columns contains every $s^t$ symbol combination as a row exactly $N/s^t$ times.
\end{defn}
For an $N\times k$ array $\D$ whose entries belong to a set of $s$ symbols $\{l_0,\ldots,l_{s-1}\}$, and an integer $i$ such that $0\leq i\leq s^k-1$, let $x_i$ be the number of times the symbol combination $(l_{i_1},\ldots,l_{i_k})$ appears as a row in the array, where $(i_1,\ldots,i_k)_s$ is the base $s$ representation of $i$. Then the vector
$\x \in \{0,\ldots, N\}^{s^k}$ is called the {\em frequency vector} of $\D$. 
Let $\x$ be the frequency vector of an 
$\oan$ and $\lambda=N/s^t$.
Bulutoglu and Margot~\cite{Bulutoglu2008} provided the ILP formulation 
\begin{equation}\label{eqn:ILPOA}
\begin{aligned}
& \quad \quad \quad \quad \min  \quad \1^{\top} \x  \\
& \quad  \A(s,k,t) \x=\lambda \1, \quad \x \in \{0,\ldots, p_{\max}\}^{s^k}, 
\end{aligned}
\end{equation}
where $\A(s,k,t)$ and $p_{\max}$ are obtained as in~\cite{Bulutoglu2008}, and $\1$ is a vector of all $1$s of appropriate size. 
Then each solution of ILP~(\ref{eqn:ILPOA})
is the frequency vector of an $\oan$. Since $\1^{\top}\x=N$, every solution to ILP~(\ref{eqn:ILPOA})
is optimum. Let the rows of $\A'$ be a subset of the rows of $\A$ such that they form a basis for the row space of $\A$. Rows of $\A'$ are obtained from $\A$ by using Gaussian elimination. 
Then each solution to 
\begin{equation}\label{eqn:ILPOAfin}
\begin{aligned}
& \quad \quad \quad \quad \min  \quad 0  \\
& \quad  \A'(s,k,t) \x=\lambda\1, \quad \x \in \{0,\ldots,p_{\max}\}^{s^k}, 
\end{aligned}
\end{equation}
is an $\oan$. Even though ILP~(\ref{eqn:ILPOA}) and ILP~(\ref{eqn:ILPOAfin}) have the same solution sets, 
obtaining all solutions of ILP~(\ref{eqn:ILPOAfin}) by B\&B is faster than that of  
ILP~(\ref{eqn:ILPOA}). This is because ILP~(\ref{eqn:ILPOAfin}) is sparser, and the LP relaxations solved in a B\&B algorithm for sparser ILPs take less time to solve, resulting in a faster overall algorithm. For several $N,k,2,t$ combinations, we were able to find sparser formulations of
ILP~(\ref{eqn:ILPOAfin}) by applying the LLL reduction based reformulation method in~\cite{Pataki2009}. However, none of these reformulations allowed the B\&B algorithm to find all solutions faster. We also tried changing the order of the variables by using different strategies. However, none of the order changes or branching strategies improved the speed of finding all solutions 
compared to the minimum index B\&B for ILP~(\ref{eqn:ILPOAfin}), where variables are lexicographically ordered.

Each of the operations of permuting rows, columns, and symbols within any column of an $N$ row, $k$ column, $s$-symbol array is called an {\em isomorphism operation}~\cite{Geyer2018}. Isomorphism operations generate a group called the {\em isomorphism group} denoted by $G^{\rm iso}$(k,s)~\cite{Geyer2018}. Two $k$ column, $s$-symbol, $N$ row arrays are called {\em isomorphic} if one can be obtained from the other by applying a sequence of isomorphism operations.

%For $2$-symbol $N \times k$ arrays, we use the symbols $\{0,1\}$. This allows us to define additional operations for $2$-symbol arrays. 
For two vectors $\uu_1,\uu_2 \in \{0,1\}^N$,
let $$\uu_1 \oplus \uu_2= \uu_1+\uu_2 \mod 2.$$
For two vectors $\vv_1,\vv_2 \in \{-1,1\}^N$,
let $\vv_1 \odot \vv_2$ be the Hadamard product (component-wise product) of $\vv_1$ and $\vv_2$.
Let $\X$ be an $N \times k$ array with symbols from \{-1,1\}. Then,  
for $m \in \{1,\ldots,k\}$, Geyer \etal~\cite{Geyer2018} defined the column operation $R_m$ on $\X$ to be
{\small
\begin{equation*}\label{eqn:MZZX}
\X=\begin{bmatrix}
\x_1& \cdots &  \x_m&\cdots&\y_k\\
 \end{bmatrix} \quad  \stackrel{R_m\,\,\,}{\longrightarrow}\quad
 \begin{bmatrix}
\x_1\odot\x_m& \cdots&\x_{m-1}\odot\x_m& \x_m&\x_{m+1}\odot\x_m&\cdots & \x_k\odot \x_m\\
 \end{bmatrix}.
\end{equation*}}
For $\Y$  an $N \times k$ array with symbols from \{0,1\}, and  
 $m \in \{1,\ldots,k\}$, define the column operation $R'_m$ on $\Y$ to be
{\small
\begin{equation*}\label{eqn:MZZY}
\Y=\begin{bmatrix}
\y_1& \cdots &  \y_m&\cdots&\y_k\\
 \end{bmatrix} \quad  \stackrel{R'_m\,\,\,}{\longrightarrow}\quad
 \begin{bmatrix}
\y_1\oplus\y_m& \cdots&\y_{m-1}\oplus\y_m& \y_m&\y_{m+1}\oplus\y_m&\cdots & \y_k\oplus \y_m\\
 \end{bmatrix}.
\end{equation*}}
Let $\X=(-1)^{\Y}$ or $\log_{-1}(\X)=\Y$, where $x_{ij}=(-1)^{y_{ij}}$.
Then, $\X$ is obtained from $\Y$ by replacing ``$1$" with ``$-1$" and ``$0$" with ``$1$", and 
\begin{equation}\label{eqn:connect}
R_m(\X)=(-1)^{R'_m(\Y)}.
\end{equation}
For a set $S$, let $\langle S \rangle$ be the group generated by the elements of $S$.
Let $$G(k)^{\rm OD}_1=\langle R_1,\ldots,R_k, G^{\rm iso}(k,2)\rangle,  \text{  }  G(k)^{\rm OD}_2=\langle R'_1,\ldots,R'_k, G^{\rm iso}(k,2)\rangle.$$
Then, by equation~(\ref{eqn:connect}),
\begin{equation*}\label{eqn:connect2}
g(\X)=(-1)^{g'(\Y)}.
\end{equation*}
for all $g \in G(k)^{\rm OD}_1 $ and $g' \in G(k)^{\rm OD}_2$. Hence, $G(k)^{\rm OD}_1 \cong G(k)^{\rm OD}_2$. For the remainder of the paper, we use  $G(k)^{\rm OD}_2$ and refer to it as $G(k)^{\rm OD}$.
%\begin{equation}\label{eqn:Gkod}
%=\langle \rangle.
%\end{equation} 

Each of the operations in $G(k)^{\rm OD}$ is called an {\em OD-equivalence operation}~\cite{Geyer2018}. 
Two $k$ column, $2$-symbol, $N$ row arrays are called {\em OD-equivalent} if one can be obtained from the other by applying a sequence of OD-equivalence operations~\cite{Geyer2018}.
It is well known that if an $N$ row, $k$ column, and $s$-symbol array $\D$ is isomorphic to an $\oan$, then $\D$ is also an $\oan$. Geyer \etal~\cite{Geyer2018} showed that if $\D$ is OD-equivalent to an $\oant$ for even $t$, then $\D$ is also an $\oant$. However, this statement does not hold for odd $t$~\cite{Geyer2018}.

For $g \in G(k)^{\rm OD}$ and $i=(i_1,\ldots,i_k)_s$, let $g(i_1,\ldots,i_k)=(i'_1,\ldots,i'_k)$. Then define $g(x_i)=x_j$, where $j=(i'_1,\ldots,i'_k)_s$.
Hence, the group  $G(k)^{\rm OD}$  permutes the variables in  $\{x_0,\ldots x_{s^k-1}\}$
by permuting $(i_1,\ldots,i_k)_s$ for $i_r\in \{0,\ldots,s-1\}$ and $r \in \{1,\ldots,k\}$.
Let $\GLPfin$ be the LP relaxation symmetry group of ILP~(\ref{eqn:ILPOAfin}). Geyer \etal~\cite{Geyer2018}
showed that 

\begin{equation}\label{eqn:containment}
\GLPfin \geq
\begin{cases}
 G(k)^{\rm OD} & \text{if $t$ is even and $s=2$,}  \\
G^{\rm iso}(k,s) & \text{otherwise.} 
\end{cases}
\end{equation}
%$$\GLPfin \geq G^{\rm iso}(k,s)$$ in general, and $$\GLPfin \geq  $$ if $t$ is even and $s=2$. 
Computational evidence suggests that for $1 \leq t\leq k-1$, the group containments 
in~(\ref{eqn:containment}) are, in fact, equalities.  So, we conjecture this to be the case and assume it to be true throughout the paper.

%For $R'_{m} \in  G(k)^{\rm OD}$ and $i=(i_1,i_2,\ldots,i_k)_s$, let $R'_m(i_1,i_2,\ldots,i_k)=(i'_1,i'_2,\ldots,i'_k)$. Then define $R'_m(x_i)=x_j$ , where $j=(i'_1,i'_2,\ldots,i'_k)_s$.
%Hence, the group  $G^{\rm iso}(k,s)$  permutes the variables in  $\{x_0,\ldots x_{s^k-1}\}$
%by permuting $(i_1,i_2,\ldots,i_k)_s$ for $i_r\in \{0,\ldots,s-1\}$ for $r \in \{1,\ldots,k\}$.

In this paper, we develop a method for parallelizing the B\&B with isomorphism pruning algorithm and apply our method to classify all non-OD-equivalent $\oantf$.
%In Section~\ref{sec:BandBiso}, 
In Section~\ref{sec:Parallel}, we describe the B\&B with isomorphism pruning algorithm and develop our parallelization method.
In Section~\ref{sec:OAs}, we use the parallelized  B\&B with isomorphism pruning algorithm with the group $\GLPfin$ to classify all non-OD-equivalent $\oantf$. In particular,  for classifying all non-OD-equivalent OA$(128,9,2,4)$ and OA$(144,9,2,4)$ our method results in super-linear speedups. Moreover, it allows classifying all non-OD-equivalent  OA$(192,k,2,4)$ for $k=9,10,11$ for the first time.
%-----------------------------------------------------------------
 \section{Parallelizing the B\&B with isomorphism pruning algorithm}\label{sec:Parallel}
%----------------------------------------------------------------------
In this section, we describe the B\&B with isomorphism pruning algorithm.  If ILP~(\ref{eqn:ILP}) has an unbounded variable $x_j$, then we can write 
$x_j=x_{j1}-x_{j2}$ for variables $x_{j1}\geq 0$ and $x_{j2}\geq 0$. If $x_j \geq a$ for some $a\leq 0 $, then $x'_j=x_j-a\geq 0$.  
Hence, WLOG we assume that the variables of ILP~(\ref{eqn:ILP}) are non-negative. Let $u_i$ be the upper bound of $x_i$ obtained by solving the LP obtained 
by replacing the objective function with $-x_i$ and dropping the integrality constraints on the variables.

We call $\x$ a {\em partial solution} of ILP~(\ref{eqn:ILP}) if  the elements of a subset of the entries of $\x$ are fixed to elements of $\mathbb{R}$.  Variables that are not fixed are kept as indeterminates. To keep the notation simpler, we also call every element of $\mathbb{R}^n$  a partial solution.
For two partial solutions $\x_1$ and $\x_2$ of ILP~(\ref{eqn:ILP}), let     $\x_1'$ and $\x_2'$ be obtained from $\x_1$ and $\x_2$ by replacing unfixed variables with $-1$.
Then, we say that $\x_1$ is {\em lexicographically smaller} than $\x_2$ $(\x_1 \prec \x_2)$ if the first non-zero entry in $\x_1'-\x_2'$ is positive.

A group $G$ with identity $e$ and a function $\cdot$
such that $\cdot : G \times S \rightarrow S$  {\em acts on} a set $S$ if 
\begin{enumerate}
    \item $e\cdot s=s$ $\forall s \in S$,
    \item $(g_1g_2)\cdot s=g_1 \cdot (g_2 \cdot s)$ $\forall g_1,g_2 \in G$ and $s\in S.$
\end{enumerate}
For $x \in S$, the set $$G\,x=\{y \in S\mid y=g\cdot x \text{ for some } g \in G \}$$
is called {\em the orbit
of $x$ under the action of $G$} or a {\em $G$-orbit}. 

Let $G$ be a  permutation group that permutes the variables of   ILP~(\ref{eqn:ILP}) by permuting the indices of its variables.
Then, 
$$g(\x)_j=\x_{g^{-1}(j)}.$$
% Let $S$ be the set of all permutations of the variable vector $\x$ of ILP~(\ref{eqn:ILP}) and
Let $\mathbb{R}^n_{\x}$ be the set of all partial solutions of   
ILP~(\ref{eqn:ILP}).
 Define
  $\cdot : G \times \mathbb{R}^n_{\x} \rightarrow \mathbb{R}^n_{\x}$ to be
 $g\cdot \x=g(\x)$ for all $g \in G$ and $\x \in \mathbb{R}^n_\x$. Then, for the identity permutation $id$, 
 $$id\cdot \x=id(\x)=\x$$ and
$$(g_1g_2)\cdot \x=(g_1g_2)(\x)=g_1(g_2(\x))=g_1 \cdot (g_2 \cdot \x).$$
This is because $$(g_1g_2)(\x)_j=\x_{(g_1g_2)^{-1}(j)}=\x_{(g_2^{-1}g_1^{-1})(j)}=(g_1(g_2(\x))_j.$$
Hence,  $G$ acts on partial solutions. 
This action splits the set of partial solutions into $G$-orbits. Let the lexicographically minimum partial solution in each $G$-orbit be the representative of its orbit. Such a partial solution is called 
{\em lexicographically-$G$-minimum}.

In a B\&B algorithm, by {\em evaluating a node}, we mean solving the LP relaxation at the node.
 At a B\&B search tree node, the number of variables fixed by branching decisions is called the {\em depth of the node}. 
A branching algorithm that always creates child nodes before evaluating sibling nodes is called {\em depth first search}. A branching algorithm that creates child nodes only after evaluating all the sibling nodes is called {\em breadth first search}. In a breadth first search, the search reaches the $r$th {\em level} when the depth of every unpruned leaf is $r$. 

Let $G$ be a subgroup of the symmetry group of ILP~(\ref{eqn:ILP}).
 The {\em B\&B  with isomorphism pruning algorithm} of Margot~\cite{Margot2007} implements depth first search minimum index branching and removes a partial solution
 (node) of the B\&B search tree if it is not lexicographically minimum within its orbit under the action of $G$. 
 %Such a B\&B algorithm is called the {\em B\&B algorithm with isomorphism pruning}.
  Then, all feasible leaves of the resulting B\&B search tree are a set of all lexicographically-$G$-minimum, $G$-non-isomorphic optimum solutions.  
%%%%%%%%%%%%%%%%%%%%%%%%%%%%%%%%%%%%%%%%%%%%%%%%%%%%%%%%%%%%%%%%%%%%%%
\begin{algorithm}[t!]
\centering
\caption{\label{alg:determine} Determining jobs to run in parallel }
\begin{algorithmic}[1]
\State {\bf Input} An ILP $\mathcal{P}$, a subgroup $G$ of its symmetry group, time $T$ in seconds, and $m$.
\State  {\bf Set} $iter=1$;
\State {\bf Label} $\mathcal{P}$ unfinished;
\While {$iter \leq m $} 
 \State {\bf Branch} on the values $\{0,\ldots,u_{iter}\}$ of the minimum index non-fixed variable $x_{iter}$
 of each subproblem that is unfinished;~\label{step:branch}
\State {\bf Run} the B\&B with isomorphism pruning algorithm with group $G$ on each resulting subproblem from Step~\ref{step:branch} for $T$ seconds;~\label{step:run}
\State {\bf Label} a subproblem run in Step~\ref{step:run}  unfinished if all of its optimum solutions could not be found in  $T$ seconds, otherwise label it finished;\\
$iter=iter+1$;
\EndWhile
\State {\bf Output} A set of subproblems ILP$_1,\ldots,$ILP$_r$ to be solved in parallel to solve the ILP $\mathcal{P}$ and the set $S$ of
%a file called sol that contains 
all non-$G$-isomorphic optimum solutions found from subproblems labeled finished.
\end{algorithmic}
\end{algorithm}
%%%%%%%%%%%%%%%%%%%%%%%%%%%%%%%%%%%%%%%%%%%%%%%%%%%%%%%%%%%%%%%%%%%%
%-----------------------------------------------------------------

 In Methods~\ref{alg:determine}-\ref{alg:implement}, we assume that $u_i <\infty$ for all $i\in \{1,\ldots,n\}$.
Method~\ref{alg:determine} determines bottleneck subproblems to run the  B\&B with isomorphism pruning algorithm in parallel by implementing a breadth first search up to level $m$.
Method~\ref{alg:runjobs} 
runs the  B\&B with isomorphism pruning algorithm in parallel on the subproblems determined by 
Method~\ref{alg:determine}. Method~\ref{alg:implement}  parallelizes by implementing  Method~\ref{alg:determine} followed by  Method~\ref{alg:runjobs}. Hence, Method~\ref{alg:implement} executes a breadth first search up to level $m$ and a depth first search beyond level $m$.

%--------------------------------------------------------------------------
\begin{algorithm}[t!]
\centering
\caption{\label{alg:runjobs} Run jobs in parallel}
\begin{algorithmic}[1]
\State {\bf Input} Subproblems ILP$_1,\ldots,$ILP$_r$ and the set $S$ of all non-$G$-isomorphic optimum solutions to the ILP $\mathcal{P}$ found by Method~\ref{alg:determine}.
\For {$i:=1$ {\bf to} $r$ {\bf step} $1$}
\State {\bf Run} the B\&B with isomorphism pruning algorithm with group $G$ on ILP$_i$ in parallel;
\EndFor
\State {\bf Combine} all the obtained solutions and the set of all non-$G$-isomorphic optimum solutions to the ILP $\mathcal{P}$ found by Method~\ref{alg:determine} into a single file;  
\State {\bf Output} The set of all lexicographically-$G$-minimum, non-$G$-isomorphic optimum solutions to ILP $\mathcal{P}$.
\end{algorithmic}
\end{algorithm}
%--------------------------------------------------------------------------
\begin{algorithm}[t!]
\centering
\caption{\label{alg:implement} Implement parallelization}
\begin{algorithmic}[1]
\State {\bf Input} An ILP $\mathcal{P}$, a subgroup $G$ of its symmetry group, time $T$ in seconds, and $m$.%~\label{alg:implement}
\State {\bf Use} Method~\ref{alg:determine} with inputs  $\mathcal{P}$, $G$, $T$, and $m$ to obtain subproblems ILP$_1,\ldots,$ILP$_r$, and the set $S$ of all non-$G$-isomorphic optimum solutions to the ILP $\mathcal{P}$ found by Method~\ref{alg:determine};~\label{step:determine}
\State {\bf Use}   Method~\ref{alg:runjobs} with inputs ILP$_1,\ldots,$ILP$_r$, and  $S$;~\label{step:finish}
\State {\bf Output}  The set of all lexicographically-$G$-minimum, non-$G$-isomorphic optimum solutions to ILP $\mathcal{P}$.
\end{algorithmic}
\end{algorithm}
The following lemma is needed to justify  that the output of  Method~\ref{alg:implement} is correct.
\begin{lem}\label{lem:parent}
  The parent node of any node in the B\&B with isomorphism pruning algorithm search tree  must be lexicographically minimum under the action of $G$. 
\end{lem}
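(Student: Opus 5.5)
The plan is to argue directly from how the search tree is built, using the definition of the B\&B with isomorphism pruning algorithm of Margot~\cite{Margot2007} as recalled above. A node survives in the tree only if it is not removed. A node is removed as soon as its partial solution is not lexicographically minimum in its $G$-orbit. Children are generated only from nodes that survive. So the statement follows almost tautologically once we check that the parent of any node actually present in the tree was itself kept.

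The key steps, in order, are as follows.

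First, I will fix what a node of the tree is. A node at depth $d$ corresponds to a partial solution $\x$ in which, under minimum index branching, exactly the variables $x_1,\ldots,x_d$ are fixed and the rest are indeterminates. Its parent is the partial solution $\x^{-}$ obtained from $\x$ by unfixing $x_d$.

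Second, I will note the order of operations. A node is tested for lexicographic $G$-minimality when it is created or evaluated, and it is pruned before any branching on it occurs. Consequently, every node having children has passed this test, since branching on $\x^{-}$ happens only if $\x^{-}$ was not removed. In particular, because $\x$ exists in the tree, $\x^{-}$ was branched on, so $\x^{-}$ is lexicographically minimum within its orbit $G\,\x^{-}$ under the action of $G$ on partial solutions defined above.

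As a complement, and to make the lemma robust against possible reorderings in the parallel setting, I would also prove the contrapositive monotonicity directly. Suppose $\x^{-}$ is not lexicographically-$G$-minimum, so some $g\in G$ gives $g(\x^{-})\prec\x^{-}$. Then I want some $h\in G$ with $h(\x)\prec \x$. That would show that no child of a non-minimal node can be minimal, so such subtrees contribute nothing.

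The main obstacle is this monotonicity step. Because $g$ permutes indices, $g(\x)$ may place the newly fixed value of $x_d$ in an earlier position. This can change the comparison, given the convention that unfixed entries are replaced by $-1$. My first attempt would compare the first index $j$ where $g(\x^{-})'$ and $(\x^{-})'$ differ. If $j<d$, then both entries are fixed in $\x^{-}$ on the $\x^{-}$ side, and fixing $x_d$ only changes entries that were $-1$. I would then check case by case that the first difference and its sign persist for $g(\x)$ against $\x$. If that case analysis fails, I would fall back on the structural argument of the previous paragraph, which suffices for the lemma as stated.
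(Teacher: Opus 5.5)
Your main argument (a node that fails the lexicographic-$G$-minimality test is removed before it is branched on, so the parent $\x^{-}$ of any node present in the tree must have passed that test) is essentially the paper's own proof, so the proposal is correct. Your optional monotonicity complement also goes through with the same $g$: the first differing index $j$ must satisfy $j<d$, because $j\ge d$ would force $g$ to map $\{1,\ldots,d-1\}$ onto itself and hence give $g(\x^{-})=\x^{-}$; this stronger statement (a lexicographically-$G$-minimum node has a lexicographically-$G$-minimum parent) is in fact what the proof of Theorem~\ref{thm:main} invokes.
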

\begin{pf}
   The  B\&B with isomorphism pruning algorithm removes nodes that are not lexicographically minimum under the action of $G$. Hence, no node whose parent node is not lexicographically minimum under the action of $G$ can exist in the B\&B with isomorphism pruning algorithm search tree.
   \qed
\end{pf}
The following theorem justifies that the output of  Method~\ref{alg:implement} is correct.
\begin{thm}\label{thm:main}
The output of Method~\ref{alg:implement} is  a set of all non-$G$-isomorphic optimum solutions to ILP $\mathcal{P}$.   
\end{thm}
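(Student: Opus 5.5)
The plan is to compare Method~\ref{alg:implement} with the serial B\&B with isomorphism pruning algorithm of Margot~\cite{Margot2007}, run on $\mathcal{P}$ with group $G$ and minimum index branching. That serial algorithm is known to output exactly one lexicographically-$G$-minimum representative of each $G$-orbit of optimum solutions. It therefore suffices to show two things: the leaves produced by Method~\ref{alg:implement} are exactly the feasible leaves of Margot's tree, and every such leaf is found exactly once.

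\textbf{Step 1: the node sets agree.} Whether a node is kept or pruned is a property of the node alone, namely its partial solution, LP feasibility and lex-minimality in its $G$-orbit. It does not depend on the order in which nodes are visited. Breadth first search up to level $m$ followed by depth first search below level $m$ therefore enumerates the same set of nodes as Margot's pure depth first search, provided two facts hold.
\begin{itemize}
\item[(a)] Method~\ref{alg:determine} applies the pruning rule at every level $1,\ldots,m$. Each child created in Step~\ref{step:branch} is one of the subproblems handed to the algorithm in Step~\ref{step:run}, so the rule is applied to it there.
\item[(b)] Branching at each level is on $x_{iter}$, the minimum index non-fixed variable, over its full range $\{0,\ldots,u_{iter}\}$. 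Since every $u_i<\infty$, this covers all integer values a solution can take.
\end{itemize}
Lemma~\ref{lem:parent} is the key ingredient. Any node of Margot's tree at depth greater than $m$ has all of its ancestors lexicographically-$G$-minimum, so its depth-$m$ ancestor survives. It is therefore either a finished subproblem or one of ILP$_1,\ldots,$ILP$_r$. Conversely, every node explored inside ILP$_i$ or inside a finished subproblem is a node of Margot's tree, because the lex-minimality test refers to the full partial solution, which includes the variables fixed by Method~\ref{alg:determine}.

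\textbf{Step 2: partition and completeness.} The subproblems produced level by level are disjoint, since they fix different values of the same prefix of variables. Together with the feasible solutions they contain, they cover all surviving nodes: at each level every unfinished subproblem is split completely into its children, and every finished subproblem has already been solved completely within time $T$. Taking the union of $S$ with the outputs of ILP$_1,\ldots,$ILP$_r$ in Method~\ref{alg:runjobs} therefore gives exactly the feasible leaves of Margot's tree, with no duplicates. By the correctness of Margot's algorithm, this union is a set of all non-$G$-isomorphic, lexicographically-$G$-minimum optimum solutions. For the OA application every feasible point is optimal, and in general the optimum value used for pruning is the known global bound, so it is the same across all parallel subproblems.

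\textbf{Main obstacle.} The delicate point is justifying that isomorphism pruning inside a subproblem ILP$_i$ is \emph{not} relative to a smaller symmetry group. Running Margot's algorithm on ILP$_i$ means it tests lex-minimality of full partial solutions of $\mathcal{P}$ under $G$, not of the reduced problem under its stabilizer. I would make this explicit. The subproblem is interpreted as Margot's tree of $\mathcal{P}$ rooted at the corresponding node, so the pruning decisions are identical to those of the serial run. Without this identification, a subproblem could either output solutions that are $G$-isomorphic to solutions from another subproblem, or wrongly prune a representative. Once it is in place, Lemma~\ref{lem:parent} guarantees that nothing is lost by discarding pruned nodes at levels at most $m$.
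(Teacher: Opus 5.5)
Your proposal is correct and rests on the same key point as the paper's proof. Running the isomorphism-pruning B\&B with the full group $G$ on the full partial solution of each branching subproblem reproduces the serial pruning decisions, so a subproblem that is not lexicographically-$G$-minimum contributes nothing, and one that is loses none of the lexicographically-$G$-minimum solutions below it. The paper argues this as a two-case check on each subproblem (invoking Lemma~\ref{lem:parent} in the prefix form that children of a non-minimal node cannot be minimal); you phrase it as a simulation of Margot's serial tree and also make explicit the disjointness and coverage of $S$ and ILP$_1,\ldots,$ILP$_r$, which the paper leaves implicit.
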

\begin{pf}
It suffices to prove that it is viable to use the group $G$ with the B\&B with isomorphism pruning algorithm for each subproblem obtained by branching. If a node (subproblem) obtained by branching 
is not lexicographically minimum under the action of $G$, and the B\&B with isomorphism 
pruning algorithm is run with the group $G$ on that subproblem, then the algorithm declares the subproblem to be infeasible. This is because, by Lemma~\ref{lem:parent}, the parent of any node that is lexicographically minimum under the action of $G$  must be lexicographically minimum under the action of $G$. 
On the other hand, if a node (subproblem) obtained by branching 
is  lexicographically minimum under the action of $G$, then it is viable to use $G$ for isomorphism pruning under that node to find all non-$G$-isomorphic optimum solutions. This is because no node that is lexicographically minimum under the action of $G$ will be pruned.
\qed
\end{pf}

%-----------------------------------------------------------------
 \section{Application for classifying orthogonal arrays}\label{sec:OAs}
%----------------------------------------------------------------------
In this section, we use a preliminary parallelization scheme and Method~\ref{alg:implement} with $G=\GLPP$ to classify all non-OD-equivalent $\oant$.
Method~\ref{alg:determine} and Method~\ref{alg:runjobs} were programmed in the {\tt Perl} programming language. Google's {\tt Gemini} AI~\cite{googleGemini3} was used to speed up the programming process.
Method~\ref{alg:implement} was implemented manually.

%We applied a  preliminary parallelization scheme for classifying OA($192,9,2,4$).  
To set up our preliminary parallelization scheme, we set $F^{a_0}\prec   \cdots \prec
F^{a_{p_{\max}}} $ such that $F^{a_i}$ is the ILP obtained from ILP~(\ref{eqn:ILPOAfin}) by replacing $p_{\max}$  with $p_{\max}-i$ and $F^{a_i}$ has  the first variable fixed to $p_{\max}-i$.
%The subproblem $F^{a_{p_{\max}}}$
%while every other variable is less than or equal to $p_{\max}-i$.
 This parallelization works because $\GLPP$ permutes the variables of
 ILP~(\ref{eqn:ILPOAfin}) {\em transitively}. That is, given $x_{i_1}$ and $x_{i_2}$, 
 there is a $g \in \GLPP$ such that $x_{i_1}=x_{g(i_2)}$.
 To see this, it suffices to show that $G^{\rm iso}(k,s)$ permutes the variables of ILP~(\ref{eqn:ILPOAfin}) transitively as $\GLPP\geq G^{\rm iso}(k,s)$. This is obvious from the definition of $G^{\rm iso}(k,s)$.
%Given  variables $x_{(i_1,\ldots i_k)}$ and $x_ {(i'_1,\ldots i'_k)}$, let $g \in G^{\rm iso}(k,s)$ be such that  $g(i_1,\ldots, i_k)=(i'_1,\ldots,i'_k)$. Then, $g x_{(i_1,\ldots,i_k)}= x_{(i'_1,\ldots,i'_k)}$, proving transitivity. 
 As a permutation of objects by a permutation group is a group action on the set of objects, we say $\GLPP$ acts  {\em transitively} on the variables of ILP~(\ref{eqn:ILPOAfin}).
%  further implies that instead of the group  $Stab(F^{a_i},\GLPP)$,
The following theorem justifies using $\GLPP$ with the B\&B with isomorphism pruning algorithm at nodes $F^{a_i}$.
\begin{thm}
 If the group $\GLPP$ is used at  nodes $F^{a_i}$ for $i=0,\ldots, p_{\max}-1$ in the  B\&B with isomorphism pruning algorithm, then the combined set of solutions is 
a set of all lexicographically-$\GLPP$-minimum non-$\GLPP$-isomorphic solutions 
(OA$(N,k,s,t)$s).
\end{thm}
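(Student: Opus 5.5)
Because $\GLPP$ acts transitively on the variables, every OA$(N,k,s,t)$ frequency vector $\x$ lies in the $\GLPP$-orbit of a vector whose first coordinate equals $\max_j x_j$. I would first use this to show that the lexicographically-$\GLPP$-minimum representative $\x^*$ of any orbit of solutions has $x^*_0=\max_j x^*_j$. Suppose instead that $x^*_0<x^*_j$ for some $j$. Choose $g$ with $g(\x^*)_0=x^*_j$; this exists by transitivity. Then $g(\x^*)$ and $\x^*$ differ first at coordinate $0$, and there $g(\x^*)$ is larger. Under the paper's convention, "lexicographically smaller'' means the first nonzero entry of the difference is positive, i.e.\ larger entries come first. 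So $g(\x^*)\prec\x^*$, a contradiction.

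Hence, if $x^*_0=p_{\max}-i$, then every entry of $\x^*$ is at most $p_{\max}-i$, and $\x^*$ is a feasible point of the subproblem $F^{a_i}$. The case $i=p_{\max}$ would force $\x^*=\mathbf{0}$, which is infeasible since $N>0$. So it suffices to run $i=0,\ldots,p_{\max}-1$.

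Next I would argue that running B\&B with isomorphism pruning using the full group $\GLPP$ on $F^{a_i}$ keeps exactly the orbit representatives with $x_0=p_{\max}-i$. The issue is that $F^{a_i}$ is not literally a node of the original search tree, because the upper bounds were tightened. I would handle this through the node-wise correctness argument of Theorem~\ref{thm:main}, together with the fact that the lex-min criterion depends only on the partial solution, not on the bounds.

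\begin{itemize}
\item \emph{Nothing needed is lost.} Every partial solution along the branch path to $\x^*$ is a prefix of $\x^*$, and a prefix of a lexicographically-$G$-minimum solution is lexicographically-$G$-minimum among partial solutions. This is the content behind Lemma~\ref{lem:parent}, and it is the point where Margot's correctness is used. So isomorphism pruning never removes $\x^*$. LP pruning does not remove it either, since $\x^*$ is feasible for $F^{a_i}$.
\item \emph{No extras appear.} Any leaf output in $F^{a_i}$ is a lexicographically-$\GLPP$-minimum solution of the original ILP~(\ref{eqn:ILPOAfin}). Its feasibility in the original ILP follows because tightening the bounds only shrinks the feasible set.
\end{itemize}

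Finally, the subproblems partition the representatives by the value of $x_0$, so the union of outputs contains each orbit exactly once. That union is therefore the set of all lexicographically-$\GLPP$-minimum, non-$\GLPP$-isomorphic OA$(N,k,s,t)$s.

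The step I expect to be the main obstacle is justifying the use of the whole group $\GLPP$ in $F^{a_i}$. With the tightened bound $p_{\max}-i$, $\GLPP$ need not look like a symmetry of the ILP defining $F^{a_i}$ in the usual sense. I would resolve this by observing two things. First, the bound $p_{\max}-i$ applies uniformly to all variables, so it is invariant under any permutation of them. Hence $\GLPP$ remains a symmetry group of the relaxation of $F^{a_i}$ before $x_0$ is fixed. Second, fixing $x_0$ is then treated exactly as a branching decision covered by Theorem~\ref{thm:main}.
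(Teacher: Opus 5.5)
Your proposal is correct and follows essentially the same route as the paper: transitivity of $\GLPP$ forces the lexicographically-$\GLPP$-minimum representative to have $x_0=\max_j x_j$. This rules out $F^{a_{p_{\max}}}$ (the zero vector is infeasible), shows the tightened bound $p_{\max}-i$ loses no representatives, and Theorem~\ref{thm:main} justifies using the full group after branching on $x_0$. Your explicit check of the direction of $\prec$, and your argument that tightening the bounds neither loses representatives nor creates spurious ones (lexicographic minimality does not depend on the bounds), spell out steps the paper's short proof leaves implicit.
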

\begin{pf}
    By Theorem~\ref{thm:main}, we can branch on the first variable by fixing it to $p_{\max}-i$
    for $i=0,\ldots,p_{\max}$ and still use $\GLPP$ to find a set of all lexicographically-$\GLPP$-minimum non-$\GLPP$-isomorphic solutions.
     The transitivity of the action of $\GLPP$ implies that no lexicographically $\GLPP$-minimum
 solution can have $x_0=0$ as at least one of the $x_i$s is positive. This rules out the subproblem $F^{a_{p_{\max}}}$.
    By the transitivity of the action of $\GLPP$ on the variables of ILP~(\ref{eqn:ILPOAfin}),
    no lexicographically-$\GLPP$-minimum solution can have $x_0=p_{\max}-i$ and $x_j>p_{\max}-i$
    for some $j>0$.
    \qed
\end{pf}

For OA$(16\lambda,9,2,4)$ and $\lambda\geq 10$,  it is our experience that the preliminary parallelization divides the overall job of classifying  $\oan$ up to OD-equivalence
into $p_{\max}$ jobs $F^{a_i}$ for $i=0,\ldots, p_{\max}-1$ that take nearly equal CPU time and memory. Each of the jobs $F^{a_i}$ is further parallelized by applying Method~\ref{alg:implement} with the group $\GLPP$, where the minimum index breadth first search branching in Method~\ref{alg:determine} starts at the second variable $x_1$.
%-----------------------------------------------------------------------
For OA$(128,9,2,4)$, we ran Method~\ref{alg:implement}  with $m=4$ and $T=.3$ minutes.

The smallest unclassified OA$(N,9,2,4)$ case is  OA$(192,9,2,4)$.
  For classifying all $\GLPP$-non-isomorphic  OA$(192,9,2,4)$, $p_{\max}=3$. So, the preliminary parallelization yields  $F^{a_0}\prec F^{a_1}\prec F^{a_2}$ as subproblems to solve.
 Method~\ref{alg:implement} allows dividing this problem into smaller subproblems that require nearly equal time and memory to solve. 
 %This can be done by examining the time it takes to find 
 %lexicographically $\GLPP$-minimum,   $\GLPP$-non-isomorphic  OA$(176,9,2,4)$ and OA$(160,9,2,4)$ solutions.
 %We can only hope that a parallelization that works well for  OA$(176,9,2,4)$ or OA$(160,9,2,4)$ would also work well for   
%OA$(192,9,2,4)$.
%\item More research is needed to come up with a canonical way of dividing the subproblems at $F^{a_i}$ for $i=0,\ldots,p_{{\max}-1}$ into smaller subproblems while maintaining nearly equal amount of time and memory needs.

\begin{table}[h]
  \centering
  \caption{%The number of OA isomorphism classes, 
  The number of OD-equivalence
classes, real time (minutes), Method~\ref{alg:implement} inputs $(m, T)$.}
  \label{table:example1}
  \begin{tabular}{|cLccc|}
  \hline
    OA($N,k,s,t$) & OA & OD & Time & Parameters \\
    \hline
    OA($128, 9 ,2 ,4$) & 4132 & 680& 63  &  ($4,.3$)\\
      OA($144, 9 ,2 ,4$) & 0 & 0& &  \\
          OA($192, 9 ,2 ,4$) & ?? & ??& &  \\
  %     OA($192, 10 ,2 ,4$) & Value 2 & Value 3& &  \\
  %      OA($192, 11 ,2 ,4$) & Value 2 & Value 3& &  \\
       \hline
  \end{tabular}
\end{table}
We classified $\oantf$ up to OD-equivalence.
If needed, a set of all non-OD-equivalent $\oantf$ can be used  to obtain a set of all non-isomorphic $\oantf$ by following the four steps in~\cite{Bulutoglu2016}.
Each subproblem for the overall parallelization was created and solved using Method~\ref{alg:implement} after implementing the preliminary parallelization.   Hence, for each $\oantf$ case, $m \times p_{\max}$ many threads were used.
Table~\ref{table:example1} reports the number of non-OD-equivalent $\oantf$, the time it took to classify $\oantf$, and the input parameters $m$ and $T$ for Method~\ref{alg:implement}.
Each reported time is the solution time for the longest-running parallel solved subproblem.

Method 5.4 in~\cite{Bulutoglu2016} was used to obtain a set of all 
non-OD-equivalent OA$(192,k+1,2,4)$ from a set of all  OA$(192,k,2,4)$
for $k=9,10$. Table~\ref{table:example2} reports the number of non-OD-equivalent  OA$(192,k,2,4)$ and the time it took to classify  OA$(192,k,2,4)$ for $k=10,11$.
\begin{table}[h]
  \centering
  \caption{%The number of OA isomorphism classes,
  The number of OD-equivalence
classes, real time (minutes).}
  \label{table:example2}
  \begin{tabular}{|cLcc|}
  \hline
    OA($N,k,s,t$) & OA & OD & Time  \\
    \hline
 %   OA($128, 9 ,2 ,4$) & 4132 & 680& 63  &  ($4,.3$)\\
 %     OA($144, 9 ,2 ,4$) & 0 & 0& &  \\
 %         OA($192, 9 ,2 ,4$) & Value 2 & Value 3& &  \\
       OA($192, 10 ,2 ,4$) & ?? & ??&   \\
        OA($192, 11 ,2 ,4$) & ?? & ??&   \\
 %          OA($192, 12 ,2 ,4$) & 0 & 0&   \\     
       \hline
  \end{tabular}
\end{table}

%----------------------------------------------------------------------
\section*{Acknowledgments}
%----------------------------------------------------------------------
During the preparation of this work, the author used Google's {\tt Gemini} AI~\cite{googleGemini3} to write {\tt Perl} code. After using Google's {\tt Gemini}, the author reviewed and checked the  {\tt Perl} code as needed and takes full responsibility for the content of the published article.

The author thanks Ryan Pena and David Doak for their computer support. This research was funded by AFOSR  grant 22RT0446. 
  The views expressed in this article are those of the author and do not reflect the official policy or position of the United States Air Force, Department of Defense, or the U.S.~Government.

%-----------------------------------------------------------------------
%-----------------------------------------------------------------------
\bibliographystyle{elsarticle-harv}
\bibliography{BibliographyDO}
%--------------------------------------------------------

%-----------------------------------------------------------------------
\end{document}